%
%
%
%
%
%
%
\documentclass[%
reprint,
 amsmath,amssymb,
 aps,
pra,
showkeys,
]{revtex4-2}
\usepackage{titlesec}

\usepackage{mathrsfs}

\usepackage{graphicx}
\usepackage{dcolumn}
\usepackage{bm}

\usepackage{amsthm}
\usepackage{xspace}
\usepackage[mathcal]{euscript}


\usepackage{orcidlink}

\usepackage{url}
\usepackage{hyperref}
\usepackage{color}
\definecolor{refcolor}{RGB}{0,0,190}
\hypersetup{
    colorlinks,
    citecolor=refcolor,
    filecolor=refcolor,
    linkcolor=refcolor,
    urlcolor=refcolor
}

\usepackage{booktabs}

\usepackage{bookmark}
\bookmarksetup{
  numbered, 
  open,
}

\begin{document}

\newtheorem{theorem}{Theorem}
\newtheorem{question}{Question}

\theoremstyle{remark}
\newtheorem{answer}{Answer}

\newtheorem{proofStep}{Step}
\numberwithin{proofStep}{theorem} 

\theoremstyle{definition}

\newtheorem{observation}{Observation}
\newtheorem{option}{Option}
\newtheorem{problem}{Problem}
\newtheorem{consequence}{Consequence}
\newtheorem{experiment}{Experiment}
\newtheorem{implication}{Implication}
\newtheorem{thoughtExp}{Thought Experiment}
\newtheorem{claim}{Claim}

\newtheorem{proposition}{Proposition}
\newtheorem{principle}{Principle}
\newtheorem{lemma}{Lemma}
\newtheorem{corollary}{Corollary}
\newtheorem{definition}{Definition}
\newtheorem{hypothesis}{Hypothesis}
\newtheorem{assumption}{Assumption}
\newtheorem{property}{Property}
\newtheorem{criterion}{Criterion}
\newtheorem{objection}{Objection}
\newtheorem{remark}{Remark}
\newtheorem{example}{Example}
\theoremstyle{remark}
\newtheorem{reply}{Reply}

\newenvironment{solution}[2] {\paragraph{Solution to {#1} {#2} :}}{\hfill$\square$}



\newcommand{\orcid}[1]{\href{https://orcid.org/#1}{\textcolor[HTML]{A6CE39}{\aiOrcid}}}

\newcommand{\pbref}[1]{\ref{#1} (\nameref*{#1})}
   
\def\({\big(}
\def\){\big)}

\newcommand{\tn}{\textnormal}
\newcommand{\ds}{\displaystyle}
\newcommand{\dsfrac}[2]{\displaystyle{\frac{#1}{#2}}}

\newcommand{\boplus}{\textstyle{\bigoplus}}
\newcommand{\botimes}{\textstyle{\bigotimes}}
\newcommand{\bcup}{\textstyle{\bigcup}}
\newcommand{\bcap}{\textstyle{\bigcap}}
\newcommand{\bsqcup}{\textstyle{\bigsqcup}}
\newcommand{\bsqcap}{\textstyle{\bigsqcap}}

\newcommand{\MQS}{\ref{def:MQS}}
\newcommand{\TPS}{\ref{as:TPS}}
\newcommand{\ASSpace}{\ref{as:space}}
\newcommand{\ASBasis}{\ref{as:basis}}

\newcommand{\struct}{\mc{S}}
\newcommand{\kind}{\mc{K}}

\newcommand{\statespace}{\mathcal{S}}
\newcommand{\hilbert}{\mathcal{H}}
\newcommand{\vectorspace}{\mathcal{V}}
\newcommand{\mc}[1]{\mathcal{#1}}
\newcommand{\ms}[1]{\mathscr{#1}}

\newcommand{\wh}[1]{\widehat{#1}}
\newcommand{\dwh}[1]{\wh{\rule{0ex}{1.3ex}\smash{\wh{\hfill{#1}\,}}}}

\newcommand{\wt}[1]{\widetilde{#1}}
\newcommand{\wht}[1]{\widehat{\widetilde{#1}}}
\newcommand{\on}[1]{\operatorname{#1}}

\newcommand{\vect}[1]{\mathsf{#1}}
\newcommand{\oper}[1]{\wh{\mathbf{#1}}}
\newcommand{\TT}{\intercal}

\newcommand{\R}{\mathbb{R}}
\newcommand{\C}{\mathbb{C}}
\newcommand{\Z}{\mathbb{Z}}
\newcommand{\K}{\mathbb{K}}
\newcommand{\N}{\mathbb{N}}
\newcommand{\T}{\mathbb{T}}
\newcommand{\Prj}{\mathcal{P}}
\newcommand{\abs}[1]{\left|#1\right|}

\newcommand{\de}{\operatorname{d}}
\newcommand{\tr}{\operatorname{tr}}
\newcommand{\im}{\operatorname{Im}}

\newcommand{\dof}{d.o.f.\xspace}
\newcommand{\dofs}{d.o.f.s\xspace}

\newcommand{\ie}{\textit{i.e.}\ }
\newcommand{\vs}{\textit{vs.}\ }
\newcommand{\eg}{\textit{e.g.}\ }
\newcommand{\cf}{\textit{cf.}\ }
\newcommand{\etc}{\textit{etc}}
\newcommand{\etal}{\textit{et al.}}

\newcommand{\Span}{\tn{span}}
\newcommand{\pde}{PDE}
\newcommand{\U}{\tn{U}}
\newcommand{\SU}{\tn{SU}}
\newcommand{\GL}{\tn{GL}}

\newcommand{\schrod}{Schr\"odinger}
\newcommand{\vonneum}{Liouville-von Neumann}
\newcommand{\ks}{Kochen-Specker}
\newcommand{\leggarg}{Leggett-Garg}
\newcommand{\bra}[1]{\langle#1|}
\newcommand{\ket}[1]{|#1\rangle}
\newcommand{\braket}[2]{\langle#1|#2\rangle}
\newcommand{\ketbra}[2]{|#1\rangle\langle#2|}
\newcommand{\expectation}[1]{\langle#1\rangle}
\newcommand{\Herm}{\tn{Herm}}
\newcommand{\Eval}{\tn{Eval}}
\newcommand{\Sym}[1]{\tn{Sym}_{#1}}
\newcommand{\meanvalue}[2]{\langle{#1}\rangle_{#2}}

\newcommand{\btimes}{\boxtimes}
\newcommand{\btimess}{{\boxtimes_s}}

\newcommand{\h}{\mathbf{(2\pi\hbar)}}
\newcommand{\x}{\mathbf{x}}
\newcommand{\xThree}{\boldsymbol{x}}
\newcommand{\z}{\mathbf{z}}
\newcommand{\q}{\boldsymbol{q}}
\newcommand{\p}{\boldsymbol{p}}
\newcommand{\f}{\mathbf{f}}
\newcommand{\0}{\mathbf{0}}
\newcommand{\annih}{\widehat{\mathbf{a}}}

\newcommand{\cs}{\mathscr{C}}
\newcommand{\ps}{\mathscr{P}}
\newcommand{\xhat}{\widehat{\x}}
\newcommand{\phat}{\widehat{\mathbf{p}}}
\newcommand{\fqproj}[1]{\Pi_{#1}}
\newcommand{\cqproj}[1]{\wh{\Pi}_{#1}}
\newcommand{\cproj}[1]{\wh{\Pi}^{\perp}_{#1}}

\newcommand{\M}{\mathbb{E}_3}
\newcommand{\D}{\mathbf{D}}
\newcommand{\dn}{\tn{d}}
\newcommand{\db}{\mathbf{d}}
\newcommand{\n}{\mathbf{n}}
\newcommand{\m}{\mathbf{m}}
\newcommand{\V}[1]{\mathbb{V}_{#1}}
\newcommand{\F}[1]{\mathcal{F}_{#1}}
\newcommand{\Fvacuumfield}{\widetilde{\mathcal{F}}^0}
\newcommand{\nD}[1]{|{#1}|}
\newcommand{\Lin}{\mathcal{L}}
\newcommand{\End}{\tn{End}}
\newcommand{\vbundle}[4]{{#1}\to {#2} \stackrel{\pi_{#3}}{\to} {#4}}
\newcommand{\vbundlex}[1]{\vbundle{V_{#1}}{E_{#1}}{#1}{M_{#1}}}
\newcommand{\rep}{\rho_{\scriptscriptstyle\btimes}}

\newcommand{\intl}[1]{\int\limits_{#1}}

\newcommand{\moyalBracket}[1]{\{\mskip-5mu\{#1\}\mskip-5mu\}}

\newcommand{\Hint}{H_{\tn{int}}}

\newcommand{\quot}[1]{``#1''}

\def\sref #1{\S\ref{#1}}

\newcommand{\dBB}{de Broglie--Bohm}
\newcommand{\dBBt}{{\dBB} theory}
\newcommand{\pwt}{pilot-wave theory}
\newcommand{\PWT}{PWT}
\newcommand{\NRQM}{{\textbf{NRQM}}}

\newcommand{\image}[3]{
\begin{figure}[!ht]
\centering
\includegraphics[width=#2\textwidth]{#1}
\caption{\small{\label{#1}#3}}
\end{figure}
}

\newcommand{\no}[1]{\overline{#1}}

\newcommand{\cmark}{\ding{51}}%
\newcommand{\xmark}{\ding{55}}%

\newcommand{\isonomic}{\hyperref[def:isonomy]{isonomic}\xspace}
\newcommand{\isonomy}{\hyperref[def:isonomy]{isonomy}\xspace}
\newcommand{\isonomies}{\hyperref[def:isonomy]{isonomies}\xspace}

\hyphenation{pa-ram-e-trized ob-serv-er }

\newcommand{\starbreak}{
\vspace{-0.1cm}
\begin{center}
$\ast\ast\ast$
\end{center}
\vspace{-0.2cm}
}
\newcommand{\todo}[1]{\textcolor{red}{\raisebox{-1pt}{\scalebox{1.5}{$\bigtriangleup$}}$\mkern-15.3mu${\textbf{!}}} \textcolor{blue}{\ #1}\PackageWarning{TODO:}{#1!}{}}



\title{Does the Hamiltonian determine the tensor product structure and the 3d space?}

\author{Ovidiu Cristinel Stoica\ \orcidlink{0000-0002-2765-1562}}
\affiliation{
 Dept. of Theoretical Physics, NIPNE---HH, Bucharest, Romania. \\
	Email: \href{mailto:cristi.stoica@theory.nipne.ro}{cristi.stoica@theory.nipne.ro},  \href{mailto:holotronix@gmail.com}{holotronix@gmail.com}
	}%

\date{\today} 

\begin{abstract}
It was proposed that the tensor product structure of the Hilbert space is uniquely determined by the Hamiltonian's spectrum, for most finite-dimensional cases satisfying certain conditions. 

I show that any such method would lead to infinitely many tensor product structures. The dimension of the space of solutions grows exponentially with the number of qudits. In addition, even if the result were unique, such a Hamiltonian would not entangle subsystems.

These results affect the proposals to recover the 3d space from the Hamiltonian.
\end{abstract}

\keywords{tensor product structure; entanglement; emergent spacetime; Hilbert space fundamentalism.}

\maketitle

\section{Introduction}
\label{s:intro}

The usual ingredients for specifying a quantum theory are a Hilbert space $\hilbert$, a time-dependent density operator $\oper{\rho}(t)$ on $\hilbert$ representing the total state of the system, a Hermitian operator $\oper{H}$ on $\hilbert$ -- the Hamiltonian operator -- encoding the time evolution of the system, a decomposition of the Hilbert space as a tensor product
\begin{equation}
\label{eq:hilbert-tensor}
\hilbert=\hilbert_1\otimes\hilbert_2\otimes\ldots,
\end{equation}
describing the structure of the system in terms of subsystems or regions of space, and a set of Hermitian operators on $\hilbert$ or on the subsystem spaces $\hilbert_j$, to represent physically observable properties like positions and momenta.

The decomposition \eqref{eq:hilbert-tensor} represents the \emph{tensor product structure} (TPS) of $\hilbert$.
If the system consists of subsystems, for example particles whose number remains constant, the Hilbert spaces $\hilbert_j$ correspond to these subsystems. Decompositions like \eqref{eq:hilbert-tensor} also appear when we are interested in the quantum fields on regions of space.

In practice, physicists inferred the TPS from experiments, that is, from the observables corresponding to subsystems. The TPS can be determined uniquely by a set of algebras of observables \cite{ZanardiLidarLloyd2004QuantumTensorProductStructuresAreObservableInduced}.

In \cite{CotlerEtAl2019LocalityFromSpectrum} it was proposed that under certain conditions there is a method determining, only from the Hamiltonian's spectrum (including the multiplicities of eigenvalues), a TPS supposed to be unique in most cases with $\dim \hilbert<\infty$.
All this without using other structures, observables, or a preferred basis.
This suggestion was used in the literature to propose that the $3$d space or spacetime emerges from the Hamiltonian \cite{CarrollSingh2019MadDogEverettianism,Carroll2021RealityAsAVectorInHilbertSpace,Giddings2019QuantumFirstGravity,Szangolies2023StandardModelSymmetryAndQubitEntanglement}.
In \cite{Stoica2021SpaceThePreferredBasisCannotUniquelyEmergeFromTheQuantumStructure} I showed that if such a method results in a unique TPS, that TPS can't have observable consequences.

In Section \sref{s:proof-counting} I show that no method to obtain a TPS only from the spectrum of the Hamiltonian can have a unique result.
In Section \sref{s:entanglement} I show that even if the resulting TPS were unique, such a Hamiltonian would be unable to entangle subsystems.
So even if ``unique'' would mean ``unique up to an equivalence'' in whatever sense, such a TPS won't be like the TPSs known from physics.
I discuss the implications for some approaches to emergent space or spacetime.

\section{Proof of non-uniqueness}
\label{s:proof-counting}

\begin{theorem}
\label{thm:nonuniqueness}
No method to obtain the TPS from the Hamiltonian's spectrum can give a unique result.
The number of additional continuous parameters needed to fix the TPS uniquely is exponential in the number of qudits.
\end{theorem}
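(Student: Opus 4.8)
The plan is to exploit the fact that locality is a joint property of an operator together with a TPS, and that it is invariant under simultaneous unitary conjugation. Fix an abstract Hilbert space $\hilbert=\C^N$ with $N=\prod_{j=1}^n d_j$, and suppose $\oper{H}$ is local with respect to some reference TPS $\tau_0$. A TPS is an identification $\hilbert\cong\hilbert_1\otimes\cdots\otimes\hilbert_n$ taken modulo local unitaries and permutations of equal factors, so the set of all TPSs is the homogeneous space $\U(N)/\mathcal{N}$, where $\mathcal{N}$ is the normalizer of the local-unitary subgroup $G_0\cong \U(d_1)\times\cdots\times\U(d_n)$. For any $U\in\U(N)$ the operator $U\oper{H}U^\dagger$ is local with respect to the transported TPS $U\cdot\tau_0$, with the same local expression; this is the only structural input I will need, and it holds for every TPS-based notion of locality.

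First I would restrict $U$ to the commutant of $\oper{H}$. If $[U,\oper{H}]=0$ then $U\oper{H}U^\dagger=\oper{H}$, so $\oper{H}$ is simultaneously local with respect to $\tau_0$ and with respect to $U\cdot\tau_0$. Thus every symmetry of $\oper{H}$ that does not already lie in the stabilizer $\mathcal{N}$ of $\tau_0$ produces a genuinely different TPS with respect to which $\oper{H}$ is local. This already contradicts uniqueness, because the symmetry group of $\oper{H}$ is large while $\mathcal{N}$ is comparatively small.

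To turn this into the quantitative claim I would specialize to a generic, non-degenerate spectrum, as assumed by the proposals in question. Then the commutant of $\oper{H}$ is the maximal torus $\{\sum_k e^{i\theta_k}\ketbra{E_k}{E_k}\}\cong\U(1)^N$ of unitaries diagonal in the eigenbasis, of real dimension $N$. The TPSs it generates, $\{U_\theta\cdot\tau_0\}$, are parametrized by $\theta$ modulo the intersection $\U(1)^N\cap\mathcal{N}$, so the dimension of the solution family is $N-\dim(\U(1)^N\cap\mathcal{N})$. Since $N=\prod_j d_j$ grows exponentially in the number of qudits $n$, while I will bound the intersection by a polynomial in $n$, this yields a solution space of dimension at least $N-\mathrm{poly}(n)=\prod_j d_j-\mathrm{poly}(n)$, exponential in $n$; this is only a lower bound, obtained from the commutant alone.

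The main obstacle is the last step: showing that the transported TPSs are really inequivalent, that is, that $\U(1)^N\cap\mathcal{N}$ is small. An element of this intersection is a unitary that is at once diagonal in the eigenbasis of $\oper{H}$ and a product unitary composed with a permutation of factors. A product unitary admits a complete eigenbasis of unentangled states, whereas for a generic local Hamiltonian the eigenvectors $\ket{E_k}$ are entangled across the factors; I would argue that this forces the intersection down to the global phases $\U(1)$, so that the family attains the full dimension $N-1$. Making the word \emph{generic} precise -- identifying the measure-zero set of eigenbases for which extra product symmetries appear -- is where the real work lies. Finally, because the spectrum is a complete unitary invariant, it cannot distinguish $\tau_0$ from any $U_\theta\cdot\tau_0$, so no procedure taking only the spectrum as input can return a unique TPS.
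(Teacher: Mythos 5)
Your strategy coincides in spirit with the paper's: both arguments exploit the fact that a spectrum-only method must be equivariant under every unitary commuting with $\oper{H}$, and both use the diagonal torus $\U(1)^N$, $N=\prod_j d_j$, in the commutant as the exponentially large family of symmetries. However, your proposal has a genuine, self-acknowledged gap at exactly the load-bearing step: bounding $\dim\(\U(1)^N\cap\mathcal{N}\)$. Without that bound you cannot conclude that the transported TPSs $U_\theta\cdot\tau_0$ are pairwise inequivalent, i.e.\ that the solution family really has exponential dimension --- and the entire theorem reduces to this point. The route you sketch for it (product unitaries admit unentangled eigenbases, while ``generic'' local Hamiltonians have entangled eigenvectors) is not carried out: ``generic'' is never made precise, the exceptional measure-zero set is never identified, and even if completed this would establish the claim only for generic Hamiltonians, whereas the theorem quantifies over every Hamiltonian and every method.

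The missing step has an elementary fix, which is precisely the paper's key observation and needs no genericity at all: $\U(1)^N\cap\mathcal{N}$ is a commutative subgroup of the TPS-preserving group $\U_{\tn{TPS}}$, whose identity component is the image in $\U(N)$ of $\U(d_1)\times\cdots\times\U(d_n)$ (the permutations of equal-dimensional factors form a finite extension and contribute no dimension). Any commutative subgroup of a compact Lie group has dimension at most that of a maximal torus, which here is $\sum_{j=1}^n d_j-n+1$ once local phases are identified with global ones. So the intersection is polynomially bounded in $n$, your family of TPSs has dimension at least $\prod_j d_j-\sum_j d_j+n-1$, and the argument closes --- moreover it now extends to degenerate spectra, since degeneracy only enlarges the commutant, making your restriction to non-degenerate $\oper{H}$ unnecessary. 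With this lemma inserted, your proof becomes essentially the paper's: the necessary condition $\U\(\oper{H}\)\subseteq\U_{\tn{TPS}}$ for uniqueness fails because a commutative group of dimension $\prod_j d_j$ cannot sit inside a group whose commutative subgroups have dimension at most $\sum_j d_j-n+1$.
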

\begin{proof}
Any TPS obtained uniquely only from the Hamiltonian's spectrum should be preserved by any unitary symmetry transformation $\oper{S}$ preserving the Hamiltonian, $\oper{S}\oper{H}\oper{S}^\dagger=\oper{H}$. Otherwise, this indicates that the method to obtain the TPS is based on additional hidden assumptions, introducing more information than the Hamiltonian's spectrum.
To see this, consider that the method to obtain the TPS consists of calculations done in a basis of $\hilbert$ in which $\oper{H}$ is diagonal, with the eigenvalues sorted in ascending order.
If the resulting TPS depends only on the spectrum of $\oper{H}$, we should obtain the same TPS if we use a different basis in which $\oper{H}$ has the same form.
Any such basis can be obtained from the former basis by a unitary transformation that commutes with $\oper{H}$.

The TPS can be specified uniquely in terms of algebras of observables $\mc{A}_1,\mc{A}_2,\ldots$, \ie algebras of operators on $\hilbert$, required to generate the full algebra of observables of $\hilbert$, and to satisfy, for all $j\neq k$, the conditions \cite{ZanardiLidarLloyd2004QuantumTensorProductStructuresAreObservableInduced}
\begin{equation}
\label{eq:TPS-alg-obs}
\left[\mc{A}_j,\mc{A}_k\right]=\oper{0} \tn{ and } \mc{A}_j\cap\mc{A}_k=\C\oper{1}_{\hilbert}.
\end{equation}

Therefore, the TPS resulting by transforming the original TPS using a unitary transformation $\oper{S}$ is specified by the algebra of observables
\begin{equation}
\label{eq:TPS-transformed}
\left\{\mc{A}'_1,\mc{A}'_2,\ldots\right\}=\left\{\oper{S}\mc{A}_1\oper{S}^\dagger,\oper{S}\mc{A}_2\oper{S}^\dagger,\ldots\right\}.
\end{equation}

If the TPS were unique, this would imply that, for any $\oper{S}$ satisfying $\oper{S}\oper{H}\oper{S}^\dagger=\oper{H}$, the two sets of algebras of observables are identical,
\begin{equation}
\label{eq:TPS-invar}
\left\{\mc{A}_1,\mc{A}_2,\ldots\right\}=\left\{\oper{S}\mc{A}_1\oper{S}^\dagger,\oper{S}\mc{A}_2\oper{S}^\dagger,\ldots\right\}.
\end{equation}

Therefore, a necessary condition for the uniqueness of the TPS is that the symmetries preserving the Hamiltonian should also preserve the TPS. That is,
\begin{equation}
\label{eq:uniq-nec-cond}
\U\(\oper{H}\)\subseteq\U_{\tn{TPS}},
\end{equation}
where $\U\(\oper{H}\)$ denotes the group of symmetry transformations preserving the Hamiltonian,
and $\U_{\tn{TPS}}$ denotes the group of unitary transformations preserving the TPS.

To see if the necessary condition \eqref{eq:uniq-nec-cond} can be satisfied, we can compare the dimensions of these two groups or of some of their subgroups.

The TPS is invariant under unitary transformations of $\hilbert$ acting restrictively on each of the factor Hilbert spaces $\hilbert_j$, and under the permutations of the factors $\hilbert_j$ of the same dimension \cite{ZanardiLidarLloyd2004QuantumTensorProductStructuresAreObservableInduced,CotlerEtAl2019LocalityFromSpectrum}.
The group of local unitary transformations acting only on $\hilbert_j$ is $\U\(\hilbert_j\)$.
But since $\U\(\hilbert_j\)\cap\U\(\hilbert_k\)=\U(1)$ for $j\neq k$, local phase changes are equivalent to the global ones.
Taking this (and the fact that group of permutations of the factors doesn't affect the dimension) into account, it follows that
\begin{equation}
\label{eq:group-TPS}
\begin{aligned}
\dim\U_{\tn{TPS}}
&\leq\underbrace{\dim\U(1)}_{\tn{global}}+\sum_{j=1}^n\(\dim\U\(\hilbert_j\)-\underbrace{\dim\U(1)}_{\tn{local}}\) \\
&=\sum_{j=1}^n d_j^2-n+1,
\end{aligned}
\end{equation}
where $d_j:=\dim\hilbert_j$, for all $j$.

Since the dimension of the Hilbert space $\hilbert$ is $\dim\hilbert=d_1\cdot\ldots\cdot d_n$, the dimension of the group of symmetry transformations preserving the Hamiltonian satisfies
\begin{equation}
\label{eq:dim-group-H}
d_1\cdot\ldots\cdot d_n\leq\dim\U\(\oper{H}\)\leq d_1^2\cdot\ldots\cdot d_n^2,
\end{equation}
The lower bound is attained if all eigenvalues of $\oper{H}$ are distinct, and the upper bound is attained if all eigenvalues are equal.

If the necessary condition \eqref{eq:uniq-nec-cond} were satisfied, any commutative subgroup of $\U\(\oper{H}\)$ would be included in a commutative subgroup of $\U_{\tn{TPS}}$.
I will show that this doesn't happen.

Since $\dim\hilbert=d_1\cdot\ldots\cdot d_n$, the dimension of any commutative subgroup of $\U\(\oper{H}\)$ is at most
\begin{equation}
\label{eq:dim-max-commutative-H}
\tn{D}_{\oper{H}}=d_1\cdot\ldots\cdot d_n.
\end{equation}

But the dimension of any commutative subgroup of $\U_{\tn{TPS}}$ is at most
\begin{equation}
\label{eq:dim-max-commutative-TPS}
\tn{D}_{\tn{TPS}}
=\sum_{j=1}^n (d_j-1)+1
=\sum_{j=1}^n d_j-n+1.
\end{equation}

Since $d_j\geq2$ for all $j$, for all $n>1$ we have
\begin{equation}
\label{eq:dim-max-commutative-compare}
d_1\cdot\ldots\cdot d_n>\sum_{j=1}^n d_j-n+1,
\end{equation}
so $\tn{D}_{\oper{H}}>\tn{D}_{\tn{TPS}}.
$
Then, the necessary condition \eqref{eq:uniq-nec-cond} can't be satisfied, and there are symmetries preserving the Hamiltonian that don't preserve the TPS.
Therefore, the TPS can't be uniquely determined by the Hamiltonian's spectrum.

To obtain a unique TPS, additional parameters need to be fixed, and they can't come from the spectrum.

Let's see how fast the number of additional parameters needed grows as $n$ grows.
If all subsystems are qubits or qudits of the same dimension $d_j=d$, the dimensions of the two groups $\U_{\tn{TPS}}$ and $\U\(\oper{H}\)$ are
\begin{equation}
\label{eq:dim-groups-d}
\begin{array}{lll}
\dim & \U_{\tn{TPS}} & =n(d^2-1)+1, \\
\dim & \U\(\oper{H}\) & \geq d^n. \\
\end{array}
\end{equation}

We see that $\dim\U_{\tn{TPS}}$ is linear in $n$, while $\dim\U\(\oper{H}\)$ is exponential in $n$.
Therefore, the number of additional continuous parameters needed to fix the TPS uniquely is exponential in the number of qudits.
\end{proof}

\section{Discussion: TPS, entanglement, and emergent spacetime}
\label{s:entanglement}

Now we'll see that, even if the math would have allowed some Hamiltonians to determine a unique TPS, such a Hamiltonian would not generate a unitary dynamics able to entangle subsystems.
Of course, we now know from Theorem \ref{thm:nonuniqueness} that the TPS can't be unique, but it's interesting to explore some physical reasons why this idea couldn't work to begin with.
This will also help see that, even if we would hope to get rid of non-uniqueness by taking equivalence classes of the resulting TPSs, this would not make sense physically.

The operators $e^{-\frac{i}{\hbar}\oper{H}t}$, $t\in\R$, commute with $\oper{H}$, so if the TPS were unique, when they are used as symmetry transformations, they should preserve it. But they also act as evolution operators of the system,
\begin{equation}
\label{eq:evol}
\oper{\rho}(t)=e^{-\frac{i}{\hbar}\oper{H}t}\oper{\rho}(0)e^{\frac{i}{\hbar}\oper{H}t}.
\end{equation}

Consider a state that is initially separable,
\begin{equation}
\label{eq:init-state}
\oper{\rho}(0)=\oper{\rho}_1(0)\otimes\oper{\rho}_2(0)\otimes\ldots.
\end{equation}

A state of the form $\ldots\otimes\oper{1}\otimes\oper{\rho}_j(0)\otimes\oper{1}\otimes\ldots$ is from the algebra of observables $\mc{A}_j$.
From equation \eqref{eq:TPS-invar} it follows that it evolves into a state of the form
\begin{equation}
\label{eq:evol-state-j}
\ldots\otimes\oper{1}\otimes\oper{\rho}'_{\sigma(j)}(t) \otimes\oper{1}\otimes\ldots.
\end{equation}

Since $\oper{\rho}(t)$ is continuous in $t$, $\oper{\rho}'_{\sigma(j)}(t)$ must retain its position $j$, so $\sigma(j)=j$ for all $j$.
From equation \eqref{eq:evol-state-j} and the first condition in \eqref{eq:TPS-alg-obs},
\begin{equation}
\label{eq:evol-state}
\oper{\rho}(t)=\oper{\rho}'_1(t)\otimes\oper{\rho}'_2(t)\otimes\ldots.
\end{equation}

Therefore, such a unitary evolution can't entangle the subsystems determined by the unique TPS.
But in the real world, unitary evolution can entangle subsystems.
Therefore, regardless of the method we use to get a unique TPS only from the spectrum, such a TPS can't be the TPS of systems that can change their entanglement.

In addition, the entropy of entanglement is invariant under the symmetry transformations from $\U_{\tn{TPS}}$.
Therefore, two TPSs determining different entanglement entropies should be nonequivalent.
Since for unitary evolution the entropy of entanglement is continuous in time, symmetry transformations of the form $e^{-\frac{i}{\hbar}\oper{H}t}$ that change the entropy of entanglement can't be from the group $\U_{\tn{TPS}}$. They transform the TPS into infinitely many nonequivalent TPSs resulting from the spectrum by the same method.

I chose the operators $e^{-\frac{i}{\hbar}\oper{H}t}$ as an example, a useful one since we know that the entanglement can be changed by unitary time evolution. But there are infinitely many other one-parameter groups that commute with $\oper{H}$, each of them being able to generate TPSs with different entanglement invariants, and therefore they can't be equivalent.

One may think to avoid the existence of the numerous non-equivalent TPSs implied by the fact that entanglement properties can change, by declaring all of them equivalent.
That is, declare that all TPSs obtained from one another by the symmetry transformations $e^{-\frac{i}{\hbar}\oper{H}t}$ are physically the same TPS.
But the entropy of entanglement is relative to the TPS, and any such equivalence class of TPSs would contain TPSs with respect to which the entanglement properties are different. So any TPS constructed by such an equivalence would be unable to exhibit changes in the entropy of entanglement, because all these changes would stay within the same equivalence class.
Different states with different entanglement properties would be declared to be physically equivalent.
And we know that in reality the entropy of entanglement can be changed by unitary evolution, and different states can have different entanglement properties.
So whatever we get from a construction of the TPS from the spectrum and an equivalence relation, it would not be the TPS that we know from physics.

It was proposed that, if a TPS emerges uniquely from the spectrum, the factors of the TPS can be identified with the points of the space, or whatever discrete building blocks of space we think should replace points in a quantum theory of gravity
\cite{CarrollSingh2019MadDogEverettianism,Carroll2021RealityAsAVectorInHilbertSpace}. 
The idea to recover the continuous $3$d-space from discrete building blocks is clever and it makes sense \cite{Finkelstein1969SpaceTimeCode}, and indeed these building blocks correspond to tensor product structures.

Then, if the TPS were unique, the distances between the resulting points of space could be recovered by using information that is not contained in the spectrum of the Hamiltonian, but that can be extracted from the state $\oper{\rho}$ and its relation with the TPS. Then maybe the $3$d space manifold structure would emerge at large scales. Carroll and Singh use the \emph{mutual information} for this \cite{CarrollSingh2019MadDogEverettianism,Carroll2021RealityAsAVectorInHilbertSpace}.

But now we've seen that even if the resulting TPS were unique, unitary evolution wouldn't be able to change the entanglement invariants of the TPS. The resulting spacetime metric, defined by using the mutual information, would be degenerate in the time direction, so the resulting geometry would be \emph{Carrollian} (``Carrollian spaces'' are named after Lewis Carroll \cite{LevyLeblond1965NouvelleLimiteNonRelativisteGroupePoincare,DuvalGibbonsHorvathyZhang2014CarrollVersusNewtonAndGalilei}). Everything would be frozen in time, and space would be absolute.

But more importantly, if the building blocks of space correspond to the factors of the TPS, Theorem \ref{thm:nonuniqueness} implies that they can't emerge uniquely from the spectrum.
For the infinite-dimensional Hilbert space case, infinitely many examples of systems with the same Hamiltonian and unit vector, but with different TPSs and different dimensions of space are already known to exist \cite{Stoica2022VersatilityOfTranslations}.
Moreover, even if the TPS is given along with the Hamiltonian and a unit vector, this is insufficient to describe our world \cite{Stoica2023PrinceAndPauperQuantumParadoxHilbertSpaceFundamentalism}.
Additional observables (along with their physical meaning \cite{Stoica2023AreObserversReducibleToStructures}) are needed to begin with.



\begin{thebibliography}{10}

\bibitem{Carroll2021RealityAsAVectorInHilbertSpace}
S.M. Carroll.
\newblock Reality as a vector in {H}ilbert space.
\newblock Technical Report CALT-TH-2021-010, Cal-Tech, 2021.

\bibitem{CarrollSingh2019MadDogEverettianism}
S.M. Carroll and A.~Singh.
\newblock Mad-dog {E}verettianism: {Q}uantum {M}echanics at its most minimal.
\newblock In A.~Aguirre, B.~Foster, and Z.~Merali, editors, {\em What is
  Fundamental?}, pages 95--104. Springer, 2019.

\bibitem{CotlerEtAl2019LocalityFromSpectrum}
J.S. Cotler, G.R. Penington, and D.H. Ranard.
\newblock Locality from the spectrum.
\newblock {\em Comm. Math. Phys.}, 368(3):1267--1296, 2019.

\bibitem{DuvalGibbonsHorvathyZhang2014CarrollVersusNewtonAndGalilei}
C.~Duval, G.W. Gibbons, P.A. Horvathy, and P.M. Zhang.
\newblock Carroll versus {N}ewton and {G}alilei: two dual non-{E}insteinian
  concepts of time.
\newblock {\em Class. Quant. Grav.}, 31(8):085016, 2014.

\bibitem{Finkelstein1969SpaceTimeCode}
D.~Finkelstein.
\newblock Space-time code.
\newblock {\em Phys. Rev.}, 184(5):1261, 1969.

\bibitem{Giddings2019QuantumFirstGravity}
S.B. Giddings.
\newblock Quantum-first gravity.
\newblock {\em Found. Phys.}, 49(3):177--190, 2019.

\bibitem{LevyLeblond1965NouvelleLimiteNonRelativisteGroupePoincare}
M.~L{\'e}vy-Leblond.
\newblock Une nouvelle limite non-relativiste du groupe de {P}oincar{\'e}.
\newblock {\em Annales de l'IHP Physique th{\'e}orique}, 3(1):1--12, 1965.

\bibitem{Stoica2021SpaceThePreferredBasisCannotUniquelyEmergeFromTheQuantumStructure}
O.C. Stoica.
\newblock 3d-space and the preferred basis cannot uniquely emerge from the
  quantum structure.
\newblock {\em To appear in Advances in Theoretical and Mathematical Physics.
  Preprint \href{http://arxiv.org/abs/2102.08620}{arXiv:2102.08620}}, 2021.

\bibitem{Stoica2022VersatilityOfTranslations}
O.C. Stoica.
\newblock Versatility of translational quantum dynamics.
\newblock {\em Preprint
  \href{http://arxiv.org/abs/2204.01426}{arXiv:2204.01426}}, 2022.

\bibitem{Stoica2023AreObserversReducibleToStructures}
O.C. Stoica.
\newblock Are observers reducible to structures?
\newblock {\em Preprint
  \href{https://arxiv.org/abs/2307.06783}{arXiv:2307.06783}}, 2023.

\bibitem{Stoica2023PrinceAndPauperQuantumParadoxHilbertSpaceFundamentalism}
O.C. Stoica.
\newblock The prince and the pauper. {A} quantum paradox of hilbert-space
  fundamentalism.
\newblock {\em Preprint
  \href{https://arxiv.org/abs/2310.15090}{arXiv:2310.15090}}, 2023.

\bibitem{Szangolies2023StandardModelSymmetryAndQubitEntanglement}
J.~Szangolies.
\newblock The {S}tandard {M}odel symmetry and qubit entanglement.
\newblock {\em Qeios}, 2023.

\bibitem{ZanardiLidarLloyd2004QuantumTensorProductStructuresAreObservableInduced}
P.~Zanardi, D.A. Lidar, and S.~Lloyd.
\newblock Quantum tensor product structures are observable induced.
\newblock {\em Phys. Rev. Lett.}, 92(6):060402, 2004.

\end{thebibliography}
\end{document}